\newtheorem{theorem}{Theorem}[section]
\newtheorem{lemma}[theorem]{Lemma}
\newtheorem{corollary}[theorem]{Corollary}
\newtheorem{definition}{Definition}[section]
\newtheorem{example}{Example}[section]
\newtheorem{remark}{Remark}[section]
\begin{document}
%
\title{Analysis of Statistical Properties of Nonlinear Feedforward Generators Over Finite Fields}
%
%
%

\author{Suman Roy and Srinivasan Krishnaswamy
\thanks{The authors are with the Department of Electronics and Electrical Engineering, Indian Institute
of Technology Guwahati, Guwahati-781039, Assam, India (e-mail: suman.roy@iitg.ernet.in; srinikris@iitg.ernet.in).}
}

\maketitle

\begin{abstract}
Due to their simple construction, LFSRs are commonly used as building blocks in various random number generators. Nonlinear feedforward logic is incorporated in LFSRs to increase the linear complexity of the generated sequence. In this work, we extend the idea of nonlinear feedforward logic to LFSRs over arbitrary finite fields and analyze the statistical properties of the generated sequences. Further, we propose a method of applying nonlinear feedforward logic to word-based $\sigma$-LFSRs and show that the proposed scheme generates vector sequences that are statistically more balanced than those generated by an existing scheme.  
\end{abstract}

\begin{IEEEkeywords}
 Pesudorandom number generator (PRNG), Linear feedback shift register (LFSR), Nonlinear feedforward generator (NLFG), Balanced distribution, Linear complexity.
\end{IEEEkeywords}

%
\IEEEpeerreviewmaketitle

\section{Introduction}
Pseudorandom number generators (PRNGs) \cite{Paar2009} have a wide array of applications ranging from cryptography (\cite{Paar2009}, \cite{menezes1996}) and error correcting codes  \cite{peterson1972error} to spread spectrum communication \cite{1095533}. Due to their simple construction and ease of hardware implementation linear feedback shift registers (LFSRs) are commonly used as basic building blocks for PRNGs.  For a given number of delay blocks, LFSRs with primitive characteristic polynomials generate sequences with maximum period. Such sequences have a balanced distribution of 0's and 1's and exhibit properties like the span-$n$ property and $2$-level autocorrelation which are desirable for randomness \cite{Golomb:1981:SRS:578271}. However, sequences generated by LFSRs are marred by their low linear complexity. One way of increasing the linear complexity of such sequences is by the use of nonlinear feedforward logic \cite{Groth1971}. An analysis of the linear complexity of binary sequences generated by nonlinear feedforward generatetors (NLFGs) is given in \cite{key1976}. Statistical properties of such sequences are investigated in \cite{Dawson1990}, \cite{bedi2001}, \cite{gammel2006}, \cite{teo2013}. In this paper, we have analyzed sequences generated by NLFGs where the underlying LFSR implements a linear recurring relation (LRR) in an arbitrary finite field. Further, we have proposed a method of applying nonlinear feedforward logic to $\sigma$-LFSRs. We have then compared the statistical distribution of sequences generated by the proposed scheme with those generated by the scheme mentioned in \cite{sartaj2012}.  

The remainder of this paper is organized as follows. Section \ref{sec:Linear feedback shift registers} contains an introduction to LFSRs and motivates the use of NLFGs. Section \ref{sec3} describes NLFGs and analyzes the properties of sequences generated by them. Section \ref{sec 4} describes an implementation of NLFGs over word-based $\sigma$-LFSRs and contains a statistical analysis of sequences generated by such a configuration. Section \ref{sec5} briefly summarizes the paper.
 
The notations used in this paper are as follows. The cardinality of a set $S$ is denoted by $|S|$. $\mathbb{F}_q$ denote the finite field of order $q = p^{n}$, where $p$ is a prime number and $n$ is a positive integer. $\mathbb{F}_{q}^{n}$ denotes the $n$-dimensional vector space over $\mathbb{F}_{q}$.

\section{Linear Feedback Shift Registers}
\label{sec:Linear feedback shift registers}

An $L$-stage feedback shift register (FSR) is a circuit consisting  of $L$ delay blocks along with a feedback function $f$. An $L$-stage FSR generates  a sequence $\{s_i\}_{i=0}^{\infty} =\{s_0,s_1,s_2\ldots\}$ where elements are related by a recurrence relation $s_{j+L}=f(s_j,s_{j+1},\ldots, s_{j+L-1})$. If the function $f$ is linear then the FSR is called a linear feedback shift register (LFSR). Figure \ref{LFSR} depicts an LFSR having $L$ delay blocks with a linear feedback loop.

\begin{figure}[h]
\centering
\includegraphics[scale=0.7]{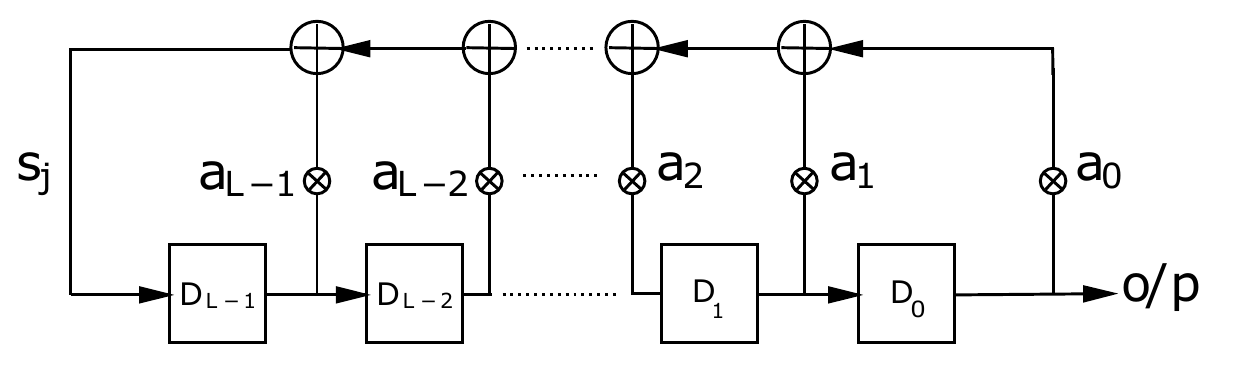}
\caption{LFSR}
\label{LFSR}
\end{figure}

The output of the LFSR shown in Figure \ref{LFSR} is a linear recurring sequence which satisfies the LRR $s_{j+L} = a_{0}{s}_{j}+a_{1}{s}_{j+1}+\ldots+a_{L-1}{s}_{j+L-1}$,
where $a_{i} \in \mathbb{F}_{q}$ for $0 \leq i \leq L-1$. With every LRR one can associate a polynomial having the same coefficients. Such a polynomial is called the characteristic polynomial of the LFSR. For example, the characteristic polynomial of the LFSR shown in Figure \ref{LFSR} is $p(x) = x^L - a_{L-1}x^{L-1} - \ldots - a_{0}$.
The degree of the characteristic polynomial is known as the degree of the LFSR. If the characteristic polynomial of an LFSR is primitive then the LFSR is called a primitive-LFSR. The outputs of the delay blocks at any given time of instant constitute the state vector of the LFSR at that instant. If the initial state is nonzero then a primitive-LFSR generates all the nonzero states in a single period \cite{lidl1997finite}. 

The linear complexity of a given sequence is the minimum degree of an LFSR which generates that sequence. Clearly, the linear complexity of a sequence generated by an LFSR is at most equal to the number of delay blocks in that LFSR. The linear complexity of such sequences can be increased by using nonlinear feedforward logic \cite{Groth1971}. An NLFG consists of an LFSR along with a multiplier assembly having a set of 2-input multipliers. 
\begin{figure}[h]
\centering
\includegraphics[scale=0.7]{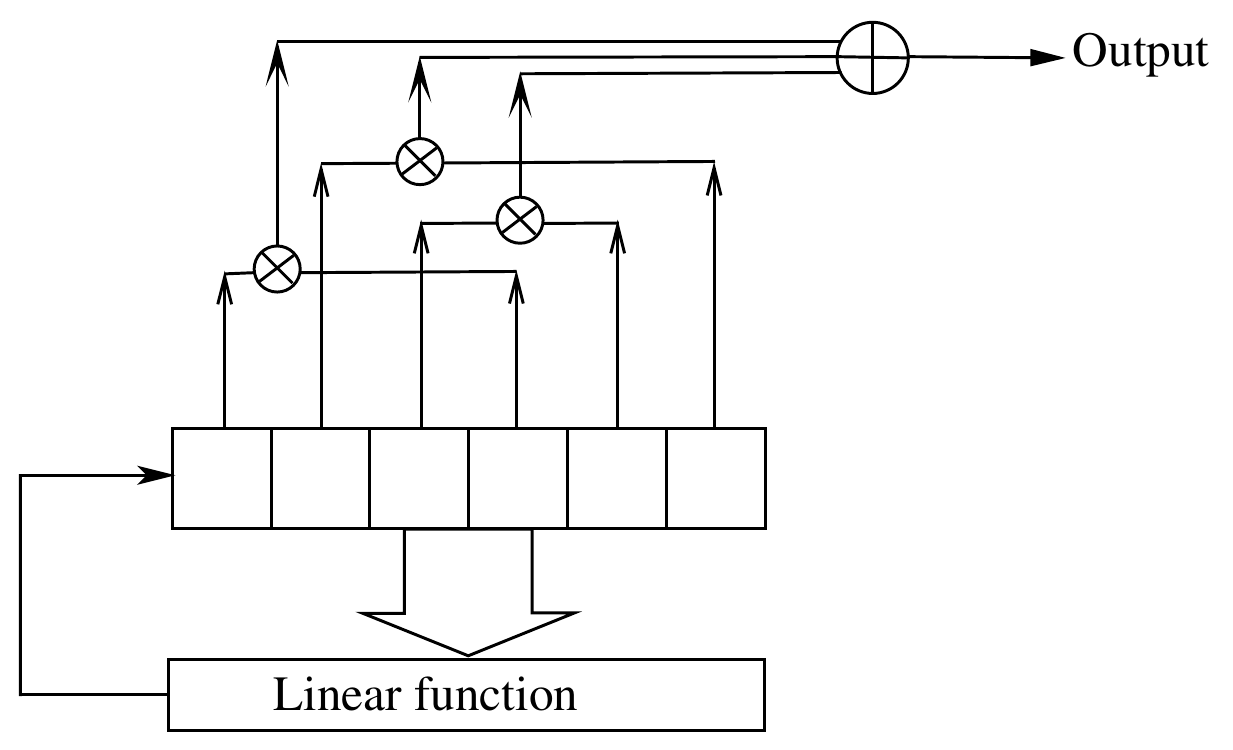} 
\caption{Nonlinear feedforward generator}
\label{NLFG}
\end{figure}

In this scheme, the output of some of the delay blocks are multiplied with each other and the resulting products are then added to generate the output sequence. The output of each delay block can act as an input to at most one multiplier. Multiplication and addition are as defined in $\mathbb{F}_{q}$. For $q = 2$, multiplication and addition translate to AND and XOR operations respectively. An example of such a scheme is shown in Figure \ref{NLFG}. In the following section, we will discuss the statistical properties of sequences generated by NLFGs over arbitrary finite fields. Our arguments do not require the underlying FSR to be linear. However, we  assume that all nonzero states occur once in every period (as in a primitive LFSR). 
\begin{figure}[h]
\centering
\includegraphics[scale=0.55]{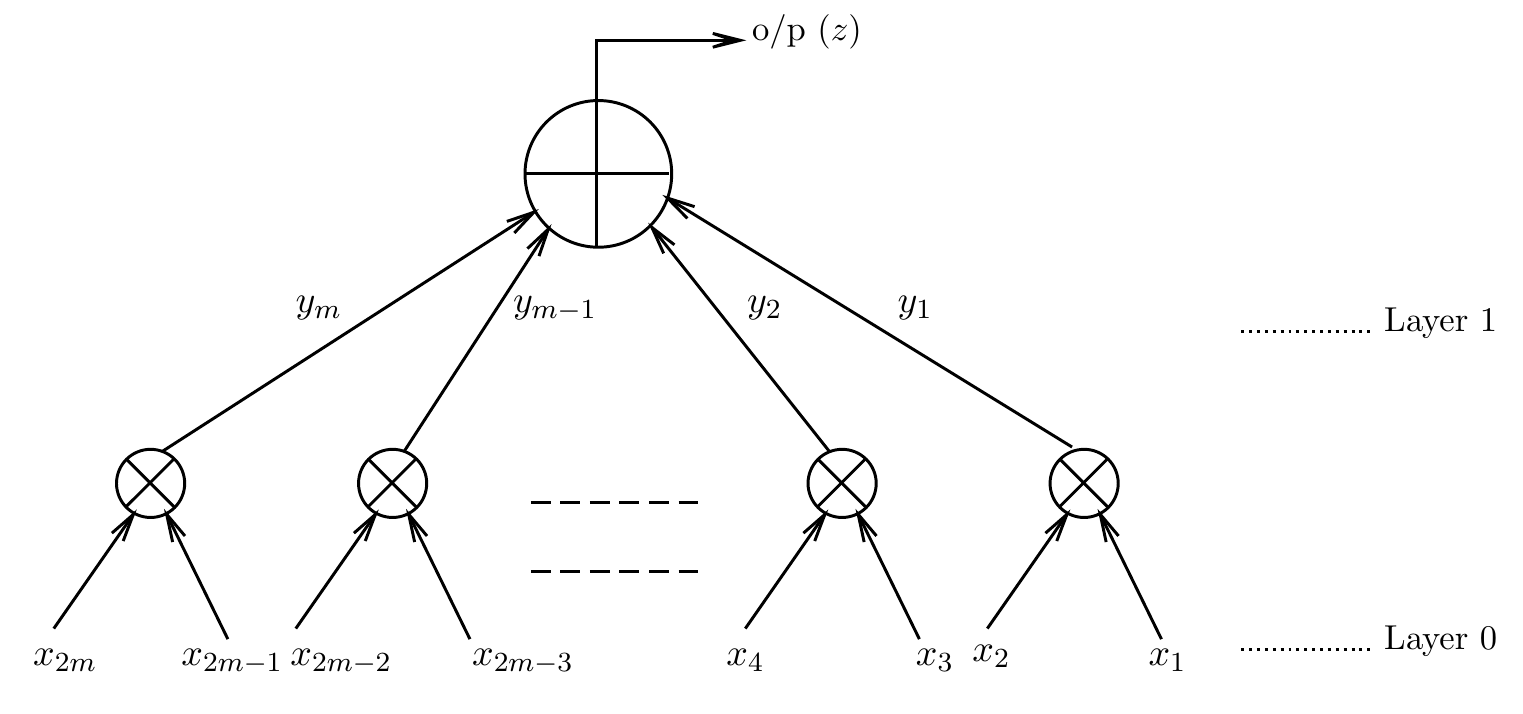}
\caption{Multiplier assembly of an NLFG with $m$ multipliers}
\label{NLFG1}
\end{figure}

\section{Statistical Properties of Sequences Generated from NLFG}
\label{sec3}
Consider an NLFG having an FSR with $L$ delay blocks and a multiplier assembly with $m \leq \lfloor \frac{L}{2} \rfloor$ multipliers. Let $\psi_{m}(K)$ denote the number of possible inputs to the multiplier assembly that generate the number $K$ at the output. When $m$ = 1, the output of the multiplier will be $0$ if either of its inputs are zero. Thus, 
\begin{equation}
\label{psi 0}
\psi_{1}(0) =2q-1~.
\end{equation}

\begin{lemma}
\label{psi 1}
$\psi_{1}(K) = (q-1)$, for all $K\in\mathbb{F}_{q}\backslash \{0\}$. 
\end{lemma}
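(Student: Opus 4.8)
The plan is to count directly the number of ordered input pairs $(a,b) \in \mathbb{F}_q \times \mathbb{F}_q$ for which the single multiplier produces the product $ab = K$, exploiting the fact that $\mathbb{F}_q$ is a field and hence every nonzero element is invertible. Since $K \neq 0$, a product $ab$ can equal $K$ only when neither factor vanishes; so I would first restrict attention to pairs with $a \in \mathbb{F}_q \setminus \{0\}$ and $b \in \mathbb{F}_q \setminus \{0\}$, discarding all pairs having a zero input (which are precisely the ones counted in $\psi_1(0)$).

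The key step is then to observe that for each fixed nonzero $a$, the equation $ab = K$ has the unique solution $b = K a^{-1}$, which is itself nonzero because $K \neq 0$. Thus the assignment $a \mapsto (a,\, K a^{-1})$ is a bijection between $\mathbb{F}_q \setminus \{0\}$ and the set of solution pairs, giving exactly $|\mathbb{F}_q \setminus \{0\}| = q - 1$ inputs that produce $K$. The same count holds for every nonzero $K$, since the argument never used the particular value of $K$ beyond its being invertible.

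As a consistency check I would verify the total over all outputs. Summing over the $q-1$ nonzero values of $K$ together with the zero count $\psi_1(0) = 2q-1$ from \eqref{psi 0} gives $(q-1)(q-1) + (2q-1) = (q-1)^2 + 2q - 1 = q^2$, which is precisely the number of ordered pairs in $\mathbb{F}_q \times \mathbb{F}_q$. This confirms that the counts partition the full input space and that no pairs are double-counted or omitted.

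I do not anticipate a genuine obstacle here: the result is an elementary consequence of the invertibility of nonzero elements of $\mathbb{F}_q$, and the only point requiring a little care is ensuring that the uniquely determined partner $b = K a^{-1}$ is automatically nonzero, so that no spurious solutions with a vanishing factor are counted and none are missed.
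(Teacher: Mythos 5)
Your proof is correct and follows essentially the same route as the paper: fix a nonzero first input and note that the unique solution $b = Ka^{-1}$ is itself nonzero, yielding $q-1$ pairs. Your added consistency check that the counts sum to $q^2$ is a nice touch but not present in (or needed by) the paper's argument.
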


\begin{proof}
Given any $K_{1} \in \mathbb{F}_{q}\backslash \{0\}$, there exists a unique $K_2 \in \mathbb{F}_q \backslash \{0\}$ such that $K_{1}.K_{2}=K$. Since there are $q-1$ possible values for $K_1$, $\psi_{1}(K) = (q-1)$.   
\end{proof}  

Lemma \ref{psi 1} shows that $\psi_{1}(K)$ does not depend upon the value of $K$ but only on whether $K$ is zero or nonzero. Therefore, in the remainder of the paper we denote $\psi_{1}(K)$ by $\psi_{nz}$ when $K \neq 0$ and by $\psi_{z}$ when $K=0$.   

Now, let $\mathcal{N}_{m}^{L}(K)$ be the number of nonzero state vectors of the underlying LFSR that generate $K$ at the output. Each of the $q^{2m}-1$ nonzero inputs to the multiplier assembly occurs $q^{L-2m}$ times. Therefore,

\begin{align}
\mathcal{N}_{m}^{L}(K) = 
     \begin{cases}
       \text{$q^{L-2m}.\psi_{m}(K),$} &\quad\text{when $K\neq0$} \\
       \text{$q^{L-2m}.\psi_{m}(0)-1,$} &\quad\text{when $K=0$} \\
     \end{cases}
\label{N_L_m_K}
\end{align}

In the expression for $\mathcal{N}_{m}^{L}(0)$, one is deducted to account for the absence of the zero state. Thus, deriving an expression for $\mathcal{N}_{m}^{L}(\cdot)$ reduces to finding a formula for $\psi_{m}(\cdot)$. 

\begin{definition}
An $m$ partition of $K$ over $\mathbb{F}_{q}$ is defined as an $m$-tuple of nonzero elements in $\mathbb{F}_{q}$ whose sum (as defined in $\mathbb{F}_{q}$) is $K$. We denote the set of $m$-partitions of $K$ by $\mathcal{S}_{m}(K)$.

\begin{equation*}
\mathcal{S}_{m}(K) :=
\left\{\begin{bmatrix}
y_{1}
\\ y_{2}
\\ .
\\ .
\\ y_{m}
\end{bmatrix}
\in \mathbb{F}_{q}^{m} \mid \sum_{i=1}^{m} y_{i} = K ~and~ y_{i} \neq 0 \right \}~.
\end{equation*}
where~ i = 1, 2, \ldots, m.
\end{definition}

Clearly, $|\mathcal{S}_{0}(K)| = 0 ~and~ |\mathcal{S}_{1}(K)| = 1$.
For $m \geq 1$, $|\mathcal{S}_{m}(K)|$ can be recursively calculated as follows.

\begin{lemma}
$|\mathcal{S}_{m}(K)|= \psi_{nz}^{m-1} - |\mathcal{S}_{m-1}(K)|$ where $K\in\mathbb{F}_{q}\backslash \{0\}$.
\label{Pm plus lemma}
\end{lemma}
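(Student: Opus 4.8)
The plan is to prove the recurrence by a complement-counting argument on the last coordinate. First I would regard an element of $\mathcal{S}_{m}(K)$ as an $m$-tuple $(y_{1},\ldots,y_{m})$ of nonzero field elements with $\sum_{i=1}^{m} y_{i} = K$, and consider the projection $\pi:(y_{1},\ldots,y_{m})\mapsto(y_{1},\ldots,y_{m-1})$ that deletes the final entry. The key observation is that once the first $m-1$ entries are fixed, the value $y_{m} = K-\sum_{i=1}^{m-1}y_{i}$ is completely determined, so $\pi$ is injective on $\mathcal{S}_{m}(K)$.

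Next I would characterise the image of $\pi$. Any tuple in the image consists of nonzero elements, and conversely an $(m-1)$-tuple $(y_{1},\ldots,y_{m-1})$ of nonzero elements lifts to a (necessarily unique) element of $\mathcal{S}_{m}(K)$ precisely when the forced value $y_{m}=K-\sum_{i=1}^{m-1}y_{i}$ is itself nonzero, i.e.\ when $\sum_{i=1}^{m-1}y_{i}\neq K$. Hence $\pi$ gives a bijection between $\mathcal{S}_{m}(K)$ and the set of $(m-1)$-tuples of nonzero elements whose sum is not equal to $K$.

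It then remains to count the latter set by complementation. The total number of $(m-1)$-tuples of nonzero elements is $(q-1)^{m-1}$, which by Lemma \ref{psi 1} equals $\psi_{nz}^{m-1}$. Among these, the tuples whose entries sum to $K$ are, by definition, exactly the elements of $\mathcal{S}_{m-1}(K)$. Subtracting these ``bad'' tuples yields $|\mathcal{S}_{m}(K)| = \psi_{nz}^{m-1} - |\mathcal{S}_{m-1}(K)|$, as claimed.

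I do not expect a genuine obstacle here, since the argument is elementary; the only points requiring care are confirming that the deleted coordinate is forced to a unique value (so that $\pi$ is a bijection onto its image rather than merely a surjection) and verifying the boundary behaviour, e.g.\ that for $m=1$ the formula correctly reduces to $|\mathcal{S}_{1}(K)| = \psi_{nz}^{0} - |\mathcal{S}_{0}(K)| = 1-0 = 1$, in agreement with the stated values $|\mathcal{S}_{0}(K)|=0$ and $|\mathcal{S}_{1}(K)|=1$.
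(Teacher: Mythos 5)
Your proof is correct and is essentially the paper's own argument: the paper also freely chooses the first $m-1$ nonzero entries in $(q-1)^{m-1}=\psi_{nz}^{m-1}$ ways, notes that the final entry is uniquely forced and is nonzero precisely when the partial sum differs from $K$, and subtracts the $|\mathcal{S}_{m-1}(K)|$ tuples whose sum already equals $K$. Your phrasing via the deletion map $\pi$ being a bijection onto its image is just a slightly more formal packaging of the same complement count.
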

\begin{proof}
One can arbitrarily choose $m-1$ nonzero elements from $\mathbb{F}_{q}$ in $(q-1)^{m-1}$ possible ways. If the sum of these $m-1$ elements is not equal to $K$ then there exists a unique nonzero element in $\mathbb{F}_{q}$ which gives $K$ when added with this sum. If the sum of these $m-1$ elements is equal to $K$ then this ($m-1$)-tuple is a member of the set $\mathcal{S}_{m-1}(K)$. Hence, $|\mathcal{S}_{m}(K)|= (q-1)^{m-1} - |\mathcal{S}_{m-1}(K)| = \psi_{nz}^{m-1} - |\mathcal{S}_{m-1}(K)|$. 
\end{proof} 

Using the above recursion, the closed-form expression for $|\mathcal{S}_{m}(K)|$ is derived as follows.
\begin{lemma}
\label{Pm}
$|\mathcal{S}_{m}(K)| = \frac{1}{q}.\{\psi_{nz}^m - (-1)^{m}\}$, where $K\in\mathbb{F}_{q}\backslash \{0\}$.
\end{lemma}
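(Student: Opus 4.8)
The plan is to solve the first-order recursion supplied by Lemma \ref{Pm plus lemma} by induction on $m$, keeping $K \in \mathbb{F}_q \setminus \{0\}$ fixed throughout. Writing $a_m := |\mathcal{S}_m(K)|$, that lemma gives $a_m = \psi_{nz}^{\,m-1} - a_{m-1}$ for $m \geq 1$, and we already have the initial values $a_0 = 0$ and $a_1 = 1$. Since the claimed closed form $\frac{1}{q}\{\psi_{nz}^{\,m} - (-1)^m\}$ is just the particular solution of this two-term recurrence satisfying the initial data, induction is the most direct route: I would verify the base case and then feed the closed form for $a_{m-1}$ into the recursion and check that the closed form for $a_m$ pops out.

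Concretely, for the base case one checks $m = 1$: the formula gives $\frac{1}{q}\{\psi_{nz} - (-1)\} = \frac{1}{q}(\psi_{nz} + 1) = 1$, using the crucial bookkeeping identity $\psi_{nz} = q - 1$, so that $\psi_{nz} + 1 = q$. For the inductive step, assume $a_{m-1} = \frac{1}{q}\{\psi_{nz}^{\,m-1} - (-1)^{m-1}\}$ and substitute into the recursion to get $a_m = \psi_{nz}^{\,m-1} - \frac{1}{q}\{\psi_{nz}^{\,m-1} - (-1)^{m-1}\}$. Collecting the $\psi_{nz}^{\,m-1}$ terms gives a factor $\bigl(1 - \tfrac{1}{q}\bigr) = \tfrac{q-1}{q} = \tfrac{\psi_{nz}}{q}$, and multiplying by $\psi_{nz}^{\,m-1}$ produces $\tfrac{1}{q}\psi_{nz}^{\,m}$; the leftover term is $\tfrac{(-1)^{m-1}}{q} = -\tfrac{(-1)^m}{q}$. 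These combine to exactly $\frac{1}{q}\{\psi_{nz}^{\,m} - (-1)^m\}$, closing the induction.

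There is no deep obstacle here; the only thing to handle carefully is the sign bookkeeping together with the repeated use of $q = \psi_{nz} + 1$ to make the fractional coefficients collapse into clean powers of $\psi_{nz}$. An alternative to induction, which I would mention as a remark, is to unroll the recursion directly into the alternating sum $a_m = \sum_{k=0}^{m-1}(-1)^k \psi_{nz}^{\,m-1-k}$ (using $a_0 = 0$) and evaluate it as a finite geometric series with ratio $-1/\psi_{nz}$; summing via $\frac{1 - r^m}{1 - r}$ and simplifying with $1 - r = q/\psi_{nz}$ yields the same closed form. Either way the computation is short, so I would present the induction as the main argument and keep the geometric-series derivation in reserve as a sanity check.
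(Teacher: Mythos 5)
Your proposal is correct and follows essentially the same route as the paper: induction on $m$ using the recursion from Lemma \ref{Pm plus lemma}, with the base case $m=1$ verified via $\psi_{nz}+1=q$ and the inductive step collapsing the coefficient $1-\frac{1}{q}=\frac{\psi_{nz}}{q}$ exactly as the paper does. The geometric-series unrolling you mention in reserve is a fine sanity check but not needed, since the main argument already matches the paper's proof step for step.
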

\begin{proof} 
We shall prove the lemma using induction.
\\Now, $|\mathcal{S}_{1}(K)| = 1 = \frac{1}{q}(\psi_{nz} + 1)$. Thus, the statement of the lemma is true for $m=1$.

Let the statement be true for $m = l$, i.e., $|\mathcal{S}_{l}(K)| = \frac{1}{q}.\{\psi_{nz}^l - (-1)^{l}\}$.
We now proceed to prove that the statement is true for $m = l+1$.
\begin{align*}
|\mathcal{S}_{l+1}(K)| &= \psi_{nz}^{l} - |\mathcal{S}_{l}(K)| ~~[using ~lemma ~\ref{Pm plus lemma}]
\\&= \psi_{nz}^{l} - \frac{1}{q}\{\psi_{nz}^l - (-1)^{l}\}
\\&= \frac{1}{q}\{q\psi_{nz}^{l} - \psi_{nz}^l + (-1)^l\}
\\&= \frac{1}{q}\{\psi_{nz}^{l}(q-1) + (-1)^l\}
\\&= \frac{1}{q}\{\psi_{nz}^{l+1} - (-1)^{l+1}\}~~[since~\psi_{nz}=q-1]
\end{align*}
\end{proof}

Assume that at a particular time instant, the outputs of $i$ of the $m$ multipliers are zero. These $i$ multipliers can be chosen in $\binom{m}{i}$ ways. Each of these multipliers can have $\psi_z$ possible pairs of inputs. Now, there are $|\mathcal{S}_{m-i}(K)|$ possible sets of outputs from the remaining $m-i$  multipliers such that the output of the adder is $K$. For each such set each multiplier can have $\psi_{nz}$ possible pairs of inputs. Therefore,

\begin{align}
\label{Result}
\psi_{m}(K)=\sum\limits_{i=0}^{m-1}\binom{m}{i}\psi_{z}^{i}\psi_{nz}^{m-i}|\mathcal{S}_{m-i}(K)|,~where~ K\in\mathbb{F}_{q}\backslash \{0\}.
\end{align}

Now, we simplify the above above formula to derive a closed form expression for $\psi_{m}(K)$.

\begin{theorem}
\label{main result}
For a multiplier assembly with $m$ multipliers and for all $K\in\mathbb{F}_{q}$. 
\begin{align*}
\psi_{m}(K) = 
     \begin{cases}
       \text{$q^{{m-1}}(q^{m}-1),$} &\quad\text{when $K\neq0$} \\
       \text{$q^{m-1}(q^{m}+q-1),$} &\quad\text{when $K=0$} \\
     \end{cases}
\end{align*}
\end{theorem}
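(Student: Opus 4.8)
The plan is to treat the two cases of the theorem separately, deriving the nonzero case directly from the recursion (\ref{Result}) and then obtaining the zero case from a global count. For $K \neq 0$, I would substitute the closed form of Lemma \ref{Pm} into (\ref{Result}). Writing $\psi_{z} = 2q-1$ and $\psi_{nz} = q-1$, each summand becomes
\begin{align*}
\binom{m}{i}(2q-1)^{i}(q-1)^{m-i}\cdot\frac{1}{q}\left\{(q-1)^{m-i}-(-1)^{m-i}\right\}.
\end{align*}
Since $|\mathcal{S}_{0}(K)| = 0$ and the closed form of Lemma \ref{Pm} likewise vanishes at $m=0$, the summation index may be extended from $m-1$ to $m$ without altering the sum, which is convenient for the next step.

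The key step is to split each summand into two pieces, using $(q-1)^{m-i}(-1)^{m-i} = (1-q)^{m-i}$, and to collect them into two complete binomial sums:
\begin{align*}
\psi_{m}(K) = \frac{1}{q}\left[\sum_{i=0}^{m}\binom{m}{i}(2q-1)^{i}\bigl((q-1)^{2}\bigr)^{m-i} - \sum_{i=0}^{m}\binom{m}{i}(2q-1)^{i}(1-q)^{m-i}\right].
\end{align*}
The decisive identities are $(q-1)^{2}+(2q-1)=q^{2}$ and $(1-q)+(2q-1)=q$; applying the binomial theorem then collapses the two sums to $q^{2m}$ and $q^{m}$, giving $\psi_{m}(K) = \frac{1}{q}(q^{2m}-q^{m}) = q^{m-1}(q^{m}-1)$, as claimed.

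For $K = 0$, I would avoid a separate recursion and instead use a total-count argument. The multiplier assembly has $2m$ inputs ranging freely over $\mathbb{F}_{q}$, so $\sum_{K\in\mathbb{F}_{q}}\psi_{m}(K) = q^{2m}$. Substituting the value just obtained for each of the $q-1$ nonzero field elements yields $\psi_{m}(0) = q^{2m} - (q-1)\,q^{m-1}(q^{m}-1)$, and expanding the product gives $q^{m-1}(q^{m}+q-1)$, as required.

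The main obstacle is the algebraic reduction in the nonzero case: one must recognize that the product $(q-1)^{m-i}\{(q-1)^{m-i}-(-1)^{m-i}\}$ separates into two powers whose bases, when combined with $2q-1$, sum to $q^{2}$ and to $q$ respectively. This is exactly what makes the binomial theorem applicable and causes the $(2q-1)^{m}$ boundary terms to cancel. Everything else is routine bookkeeping, and the zero case follows immediately from the nonzero case once the counting identity is invoked.
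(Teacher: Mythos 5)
Your proposal is correct and follows essentially the same route as the paper's proof: substitute the closed form of Lemma \ref{Pm} into Equation \ref{Result}, complete the binomial sums using $(2q-1)+(q-1)^{2}=q^{2}$ and $(2q-1)-(q-1)=q$ to get $\frac{1}{q}(q^{2m}-q^{m})$, and then obtain $\psi_{m}(0)$ from the total count $q^{2m}-(q-1)q^{m-1}(q^{m}-1)$. The only cosmetic difference is how the summation index is extended to $i=m$: you observe that the $i=m$ term vanishes because the closed form of $|\mathcal{S}_{0}(K)|$ is zero, whereas the paper adds and subtracts $\psi_{z}^{m}$ in each sum and lets the corrections cancel --- these are trivially equivalent.
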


\begin{proof}
Let $K \neq 0$. Substituting the formula for $|\mathcal{S}_{m-i}(K)|$ from Lemma \ref{Pm} in Equation \ref{Result} we get - 
\begin{align*}
\psi_{m}(K) &= \frac{1}{q}\sum\limits_{i=0}^{m-1}\binom{m}{i}\psi_{z}^{i}\psi_{nz}^{m-i}\{\psi_{nz}^{m-i}-(-1)^{m-i}\}
\\ &= \frac{1}{q}\bigg\{\sum\limits_{i=0}^{m-1}\binom{m}{i}\psi_{z}^{i}\psi_{nz}^{2(m-i)}-\sum\limits_{i=0}^{m-1}\binom{m}{i}\psi_{z}^{i}(-\psi_{nz})^{(m-i)}\bigg\}
\end{align*}
Now, $\sum\limits_{i=0}^{m-1}\binom{m}{i}\psi_{z}^{i}\psi_{nz}^{2(m-i)} = \sum\limits_{i=0}^{m}\binom{m}{i}\psi_{z}^{i}\psi_{nz}^{2(m-i)}-\psi_{z}^{m}$ and $\sum\limits_{i=0}^{m-1}\binom{m}{i}\psi_{z}^{i}(-\psi_{nz})^{(m-i)} = \sum\limits_{i=0}^{m}\binom{m}{i}\psi_{z}^{i}(-\psi_{nz})^{(m-i)}-\psi_{z}^{m}$. 
\\~\\
Therefore,
\begin{align*}
\psi_{m}(K) &= \frac{1}{q}\bigg[\bigg\{\sum\limits_{i=0}^{m}\binom{m}{i}\psi_{z}^{i}\psi_{nz}^{2(m-i)}-\psi_{z}^{m}\bigg\}
\\&-\bigg\{\sum\limits_{i=0}^{m}\binom{m}{i}\psi_{z}^{i}(-\psi_{nz})^{(m-i)}-\psi_{z}^{m}\bigg\}\bigg]
\\&= \frac{1}{q}\{(\psi_{z}+\psi_{nz}^{2})^{m}-(\psi_{z}-\psi_{nz})^{m}\}
\intertext{Substituting the values of $\psi_{z}$ and $\psi_{nz}$ from Equation \ref{psi 0} and Lemma \ref{psi 1} we get - } 
\psi_{m}(K) &=\frac{1}{q}\bigg[\{(2q-1)+(q-1)^{2}\}^{m}-\{(2q-1)-(q-1)\}^{m}\bigg]
\\&=\frac{1}{q}(q^{2m}-q^m)=\frac{q^{m}}{q}(q^{m}-1)
\\&=q^{m-1}(q^{m}-1)
\end{align*}

Since there are $(q-1)$ nonzero elements in $\mathbb{F}_q$, there are $(q-1)q^{m-1}(q^{m}-1)$ input combinations that generate a nonzero output from the NLFG. Therefore, 

\begin{align*}
\psi_{m}(0)& = q^{2m}-(q-1).q^{m-1}(q^{m}-1)
\\ & = q^{2m-1}+q^{m}-q^{m-1}
\\&= q^{m-1}(q^{m}+q-1)
\end{align*}

This concludes the proof of our theorem.  

\end{proof}

Substituting the formula for $\psi_{m}(\cdot)$ derived in Theorem \ref{main result} in Equation \ref{N_L_m_K} we get -   

\begin{corollary}
$\newline$
$\mathcal{N}_{m}^{L}(K)=\left\{
\begin{array}{ll}
       q^{L-m-1}(q^{m}-1) & ~~where ~K \neq 0.\\
       q^{L-m-1}(q^{m}+q-1)-1 & ~~where~ K=0.\\
\end{array}
\right. $
\label{N_L_m_K_final}
\end{corollary}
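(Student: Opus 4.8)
The plan is to obtain the result as a direct substitution, treating the two cases $K\neq 0$ and $K=0$ separately. Since Equation \ref{N_L_m_K} already expresses $\mathcal{N}_{m}^{L}(K)$ in terms of $\psi_{m}(K)$ and the factor $q^{L-2m}$ (counting the $q^{L-2m}$ repetitions of each nonzero input to the multiplier assembly), and since Theorem \ref{main result} supplies the closed form for $\psi_{m}(K)$, the entire content of the corollary is to combine these two facts and collapse the powers of $q$.

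First I would take the case $K\neq 0$. Here Equation \ref{N_L_m_K} gives $\mathcal{N}_{m}^{L}(K)=q^{L-2m}\,\psi_{m}(K)$, and Theorem \ref{main result} gives $\psi_{m}(K)=q^{m-1}(q^{m}-1)$. Multiplying, the prefactor becomes $q^{L-2m}\cdot q^{m-1}=q^{L-m-1}$, so $\mathcal{N}_{m}^{L}(K)=q^{L-m-1}(q^{m}-1)$, which is exactly the first line of the claimed formula. Next I would treat $K=0$. Here Equation \ref{N_L_m_K} carries the extra $-1$ correction (accounting for the excluded zero state), so $\mathcal{N}_{m}^{L}(0)=q^{L-2m}\,\psi_{m}(0)-1$. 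Substituting $\psi_{m}(0)=q^{m-1}(q^{m}+q-1)$ and again combining $q^{L-2m}\cdot q^{m-1}=q^{L-m-1}$ yields $\mathcal{N}_{m}^{L}(0)=q^{L-m-1}(q^{m}+q-1)-1$, matching the second line.

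The only thing requiring care is the exponent bookkeeping, namely verifying $L-2m+(m-1)=L-m-1$ in both cases and ensuring the $-1$ survives untouched outside the $q$-power in the $K=0$ case; there is no genuine obstacle, as the combinatorial and algebraic work has already been completed in Theorem \ref{main result}. The corollary is therefore established by substitution, and its value lies in packaging $\mathcal{N}_{m}^{L}(K)$ in a form depending only on $L$, $m$, $q$, and whether $K$ is zero, which makes transparent the near-balanced distribution of the NLFG output over $\mathbb{F}_{q}$.
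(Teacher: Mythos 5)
Your proposal is correct and matches the paper's own derivation exactly: the paper states the corollary as an immediate consequence of substituting the closed form of $\psi_{m}(\cdot)$ from Theorem \ref{main result} into Equation \ref{N_L_m_K}, which is precisely your two-case substitution with the exponent combination $q^{L-2m}\cdot q^{m-1}=q^{L-m-1}$ and the $-1$ correction retained for $K=0$. Nothing is missing.
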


\begin{remark}
It can be easily verified that $\mathcal{N}_{m}^{L}(0)+ (q-1)\mathcal{N}_{m}^{L}(K)= q^{L}-1$.
\end{remark}

\begin{remark}
The Theorem 3 in \cite{Dawson1990} is a special case of the Theorem \ref{main result} where $q=2$.
\end{remark}

We now go on to show that the distribution of elements in the output sequence of an NLFG tends to a balanced distribution as the number of delay blocks and the number of multipliers tends to infinity.   

\begin{corollary}
\label{bd1}
$$\lim_{m \to \infty} \frac{\mathcal{N}_{m}^{L}(K)}{q^{L}-1} = \frac{1}{q}, ~~~where~ K\in \mathbb{F}_{q} .$$ 
\end{corollary}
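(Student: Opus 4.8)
The plan is to substitute the closed-form expressions for $\mathcal{N}_{m}^{L}(K)$ obtained in Corollary \ref{N_L_m_K_final} directly into the ratio and then show that, after normalizing, every term except the leading contribution of order $q^{L-1}$ becomes negligible relative to $q^{L}-1$ as the parameters grow. Since the multiplier assembly satisfies $m \leq \lfloor L/2 \rfloor$, letting $m \to \infty$ forces $L \to \infty$ as well, with $L - m \geq m$; this is the joint limit that makes the statement meaningful, and it is the only point of the argument requiring any care.

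First I would treat the case $K \neq 0$. Using Corollary \ref{N_L_m_K_final},
\begin{align*}
\frac{\mathcal{N}_{m}^{L}(K)}{q^{L}-1} = \frac{q^{L-m-1}(q^{m}-1)}{q^{L}-1} = \frac{q^{L-1}-q^{L-m-1}}{q^{L}-1}.
\end{align*}
Dividing numerator and denominator by $q^{L}$ gives $\dfrac{q^{-1}-q^{-m-1}}{1-q^{-L}}$. As $m \to \infty$ with $L \geq 2m$, both $q^{-m-1}$ and $q^{-L}$ tend to $0$, so the ratio tends to $q^{-1} = 1/q$.

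Next I would handle $K = 0$. Here Corollary \ref{N_L_m_K_final} yields $\mathcal{N}_{m}^{L}(0) = q^{L-m-1}(q^{m}+q-1)-1 = q^{L-1}+q^{L-m}-q^{L-m-1}-1$, so after again dividing through by $q^{L}$,
\begin{align*}
\frac{\mathcal{N}_{m}^{L}(0)}{q^{L}-1} = \frac{q^{-1}+q^{-m}-q^{-m-1}-q^{-L}}{1-q^{-L}}.
\end{align*}
The three correction terms $q^{-m}$, $q^{-m-1}$, and $q^{-L}$ all vanish in the limit, leaving $1/q$. Since both cases yield the same value, the corollary follows.

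The computation itself is entirely elementary, so I do not expect a genuine obstacle; the only subtlety worth flagging is the interpretation of the single limit $m \to \infty$ under the standing constraint $m \leq \lfloor L/2 \rfloor$, which ties $L$ to $m$ and guarantees that the exponents $L-m-1$ and $L$ both diverge. Making that coupling explicit is what justifies discarding the lower-order terms, and it is the step I would state most carefully.
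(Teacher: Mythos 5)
Your proof is correct and takes essentially the same route as the paper: substitute the closed forms from Corollary \ref{N_L_m_K_final} and let the lower-order terms vanish after dividing by $q^{L}$. If anything, your explicit handling of the constraint $m \leq \lfloor L/2 \rfloor$ (forcing $L \to \infty$ along with $m$, so that $q^{-L} \to 0$) is more careful than the paper's, whose displayed limits read literally with $L$ fixed would yield $\frac{q^{L-1}}{q^{L}-1}$ rather than exactly $\frac{1}{q}$.
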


\begin{proof}
In the case, when $K \neq 0$ then -  
\begin{align*}
\lim_{m \to \infty} \frac{\mathcal{N}_{m}^{L}(K)}{q^{L}-1} &=\lim_{m \to \infty} \frac{q^{L-m-1}(q^{m}-1)}{q^{L}-1} = \frac{1}{q}.
\end{align*}

In the case, when $K = 0$ then -  
\begin{align*}
\lim_{m \to \infty} \frac{\mathcal{N}_{m}^{L}(0)}{q^{L}-1} &=\lim_{m \to \infty} \frac{q^{L-m-1}(q^{m}+q-1)-1}{q^{L}-1} \\&= \lim_{m \to \infty}\frac{q^{L}}{(q^{L}-1)}.\frac{q^{L-1}+q^{L-m}-q^{L-m-1}-1}{q^{L}}\\
&=\lim_{m \to \infty} \frac{q^{L}}{(q^{L}-1)}.\lim_{m \to \infty}\bigg[ \frac{1}{q}+\frac{1}{q^{m}}-\frac{1}{q^{m+1}}-\frac{1}{q^L}\bigg ] \\
&= \frac{1}{q}
\end{align*}
\end{proof}

\section{NLFGs over $\sigma$-LFSR}
\label{sec 4}

A $\sigma$-LFSR is an LFSR configuration with multi-input multi-output delay blocks that aims to utilize the parallelism provided by modern word based processors. A detailed description of $\sigma$-LFSRs can be found in \cite{zeng2007high}. Figure \ref{sigma_LFSR} depicts an $L$-stage $\sigma$-LFSR with $r$-input $r$-output delay blocks.
\begin{figure}[h]
\centering
\includegraphics[scale=0.8]{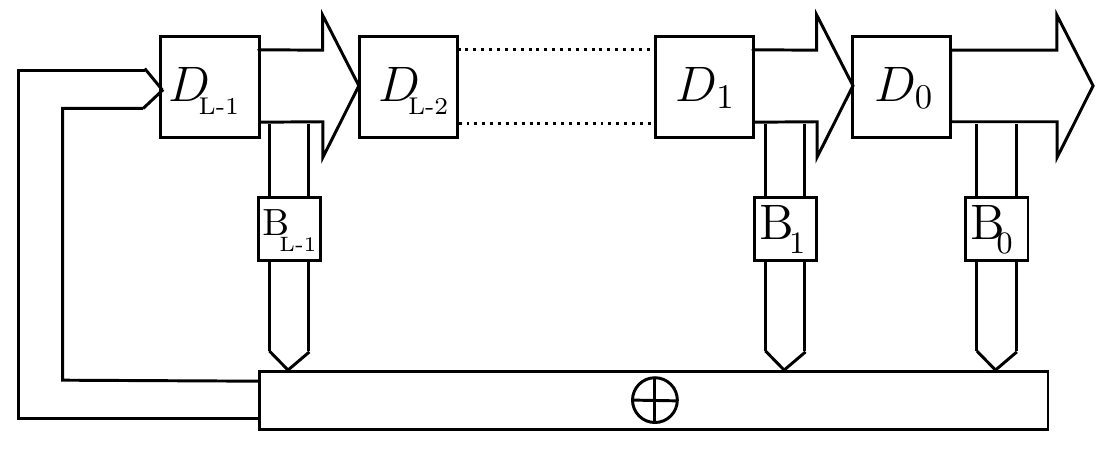}
\caption{r-input, r-output $\sigma$-LFSR of order $L$ over $\mathbb{F}_{q^{r}}$}
\label{sigma_LFSR}
\end{figure}

The feedback gain matrices $B_{0}$, $B_{1}$, \ldots, $B_{L-1}$ are elements in $\mathbb{F}_{q}^{r \times r}$. The output sequence of a $\sigma$-LFSR satisfies the following linear recurring relation
\begin{align}
\label{sigma LFSR}
\textbf{s}_{j+L} = B_{0}\textbf{s}_{j}+B_{1}\textbf{s}_{j+1}+\ldots+B_{L-1}\textbf{s}_{j+L-1}
\end{align}
where $j$=0,1,\ldots and $\textbf{s}_{j} \in \mathbb{F}_{q}^{r}$. At the $k$-th time instant, let $\textbf{s}_{i}(k)$ be the output of the $B_{i}$-th delay block. The state vector $\textbf{s}(k)$ of an $\sigma$-LFSR at that instant can be obtained by stacking the outputs of the delay blocks one below the other. For instance, 
\begin{align*}
\textbf{s}(k) =
\begin{bmatrix}
\textbf{s}_{0}(k)\\ 
\textbf{s}_{1}(k)\\ 
\vdots\\ 
\textbf{s}_{L-1}(k)\\ 
\end{bmatrix}
\in \mathbb{F}_{q}^{rL}
\end{align*}
Observe that, 
\begin{align*}
\textbf{s}_{0}(k+1)&=\textbf{s}_{1}(k)\\
\textbf{s}_{1}(k+1)&=\textbf{s}_{2}(k)\\
&\vdots\\
\textbf{s}_{L-2}(k+1)&=\textbf{s}_{L-1}(k)\\
\textbf{s}_{L-1}(k+1)&=B_{0}\textbf{s}_{0}(k)+B_{1}\textbf{s}_{1}(k)+\ldots+B_{L-1}\textbf{s}_{L-1}(k).
\end{align*} 
Thus, the relation between two consecutive state vectors of a $\sigma$-LFSR is as follows:
\begin{align}
\textbf{s}(k+1)=A_{rL}\textbf{s}(k), ~~~~\forall~k=0,1,...
\end{align} 
where

\begin{align*}
A_{rL}=
\left [ \begin{matrix}
0 & I & 0 & \hdots  & 0\\ 
0 & 0 & I& \hdots  & 0\\ 
\vdots & \vdots &  \vdots & \ddots &\vdots\\ 
0 & 0 & 0& \hdots  & I\\ 
B_{0} & B_{1} & B_{2} & \hdots & B_{L-1}
\end{matrix} \right ] \in \mathbb{F}_{q}^{rL \times rL}
\end{align*} 
Here, $0\in \mathbb{F}_{q}^{r \times r}$ is the zero matrix and $I\in \mathbb{F}_{q}^{r \times r}$ is the identity matrix. The matrix $A_{rL}$ is called the state transition matrix of the $\sigma$-LFSR. The characteristic polynomial of the state transition matrix is called the characteristic polynomial of the $\sigma$-LFSR. As in a conventional LFSR, if the characteristic polynomial of the $\sigma$-LFSR is primitive then all nonzero states are covered in a single period. 
Given positive integers $r$ and $L$ and a primitive polynomial $p(x)$ of degree $rL$, the number of $\sigma$-LFSR configurations having characteristic polynomial $p(x)$ has been calculated in \cite{Krishnaswamy2012}, \cite{srinitheis}. 

The output sequence of a $\sigma$-LFSR with $r$-input $r$-output delay blocks is a sequence in $\mathbb{F}_q^r$.  Now, each entry of this vector sequence constitutes a scalar sequence. We shall call these sequences the component sequences of the vector sequence. 
\begin{lemma}
\label{component sequence}
Each component sequence of a vector sequence generated by a primitive $\sigma$-LFSR has the same characteristic polynomial as that of the $\sigma$-LFSR.
\end{lemma}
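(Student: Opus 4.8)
The plan is to recognize each component sequence as a fixed linear functional applied to the orbit of the state transition matrix $A_{rL}$, and then to combine the Cayley--Hamilton theorem with the irreducibility forced by primitivity.

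First I would express the component sequences in terms of powers of $A_{rL}$. Since consecutive states obey $\textbf{s}(k+1)=A_{rL}\textbf{s}(k)$, we have $\textbf{s}(k)=A_{rL}^{k}\textbf{s}(0)$, and the output vector $\textbf{s}_{j}$ coincides with a fixed block of the state vector $\textbf{s}(j)$ (indeed with the top block $\textbf{s}_{0}(j)$, by the shift structure of the update equations). Consequently the $t$-th component sequence can be written as $u_{j}=\mathbf{e}_{t}^{\top}A_{rL}^{j}\textbf{s}(0)$, where $\mathbf{e}_{t}\in\mathbb{F}_{q}^{rL}$ is the standard basis vector selecting the relevant coordinate of that block.

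Next, let $p(x)=x^{rL}-\sum_{i=0}^{rL-1}a_{i}x^{i}$ denote the characteristic polynomial of $A_{rL}$, which by definition is the characteristic polynomial of the $\sigma$-LFSR. By the Cayley--Hamilton theorem $p(A_{rL})=0$, so $A_{rL}^{k+rL}=\sum_{i=0}^{rL-1}a_{i}A_{rL}^{k+i}$ for every $k\geq 0$. Multiplying this identity on the left by $\mathbf{e}_{t}^{\top}$ and on the right by $\textbf{s}(0)$ yields $u_{k+rL}=\sum_{i=0}^{rL-1}a_{i}u_{k+i}$, so every component sequence satisfies the linear recurring relation associated with $p(x)$. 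Hence $p(x)$ is a characteristic polynomial of each component sequence, and the minimal polynomial of each component sequence divides $p(x)$.

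It then remains to promote ``divides'' to ``equals'', and this is where I expect the only real work. Because $p(x)$ is primitive it is irreducible, so its only monic divisors are $1$ and $p(x)$; the minimal polynomial can equal $1$ only if the component sequence is identically zero. To exclude that possibility I would use the standing assumption that a primitive $\sigma$-LFSR run from a nonzero initial state traverses every nonzero vector of $\mathbb{F}_{q}^{rL}$ within one period. In particular it passes through a state whose coordinate picked out by $\mathbf{e}_{t}$ is nonzero, so $u_{j}\neq 0$ for some $j$ and no component sequence vanishes identically. The minimal polynomial is therefore exactly $p(x)$. The main obstacle is thus not the algebra but this last nonvanishing step: one must be sure that projecting the full $rL$-dimensional state onto a single coordinate cannot accidentally annihilate the sequence, and the all-states-visited property of a primitive configuration is precisely what guarantees that it does not.
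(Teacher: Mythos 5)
Your proposal is correct and follows essentially the same route as the paper: the paper likewise uses $p(A)v=0$ (Cayley--Hamilton) to show that every entry of the state vector satisfies the linear recurring relation of $p(x)$, so the minimal polynomial of each component sequence divides $p(x)$, and primitivity then forces equality. You are in fact slightly more careful than the paper, which does not explicitly rule out the trivial divisor (an identically zero component sequence, with minimal polynomial $1$); your appeal to the fact that a primitive $\sigma$-LFSR visits every nonzero state, hence some state with a nonzero entry in the selected coordinate, closes that small gap cleanly.
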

\begin{proof}
Consider a $\sigma$-LFSR with $L$ $r$-input $r$-output delay blocks. Let $p(x) = x^{rL}-p_{rL-1}x^{rL-1}-p_{rL-2}x^{rL-2}-\ldots-p_0$ be its primitive characteristic polynomial and $A$ be its state transition matrix. If the initial state vector is $v_0$ then the sequence of state vectors is given by $(v_{i})_{i=0}^{\infty} = \{v_0, Av_0, A^2v_0, \ldots \}$. Given any state vector $v \in \mathbb{F}_q^{rL}$, $p(A)v = 0$. Therefore, the sequence of state vectors satisfies the following LRR. 
\begin{align}
\label{LRR}
v_{j+L}=p_0v_{j}+p_1v_{j+1}+\ldots+p_{L-1}v_{j+L-1}
\end{align}
where $j=0, 1, \ldots$. Clearly, each entry of the state vector obeys the above LRR. Therefore, each component sequence satisfy the LRR. Consequently, the characteristic polynomial of each component sequence divides $p(x)$. Since $p(x)$ is primitive, this is possible only if each of these polynomials is $p(x)$.
\end{proof}

Since $\mathbb{F}_{q}^{r}$ is known to be isomorphic to $\mathbb{F}_{q^r}$,  a $\sigma$-LFSR can be seen as an FSR over the field $\mathbb{F}_{q^r}$ \cite{lidl1997finite}. Thus, each state vector of a $\sigma$-LFSR can be seen as a vector in $\mathbb{F}_{q^r}^L$. The characteristic polynomial of the $\sigma$-LFSR being primitive ensures that all non zero vectors in $\mathbb{F}_{q^r}^{L}$ occur as state vectors exactly once in every period. In the proposed scheme, the outputs of delay blocks of a $\sigma$-LFSR are multiplied as elements in $\mathbb{F}_{q^r}$. This is in contrast to the scheme given in \cite{sartaj2012} wherein multiplication is done element-wise. Note that element-wise multiplication is not equivalent to multiplication over a finite field. For example, in $\mathbb{F}_{2}^{4}$  the element-wise product of two nonzero vectors $v_{1}=[1 0 0 1]^{T},~ v_{2}=[0 1 1 0]^{T}$ is zero which is not possible over a finite field.

\begin{figure*}[htb]
\centering
\includegraphics[scale=0.8]{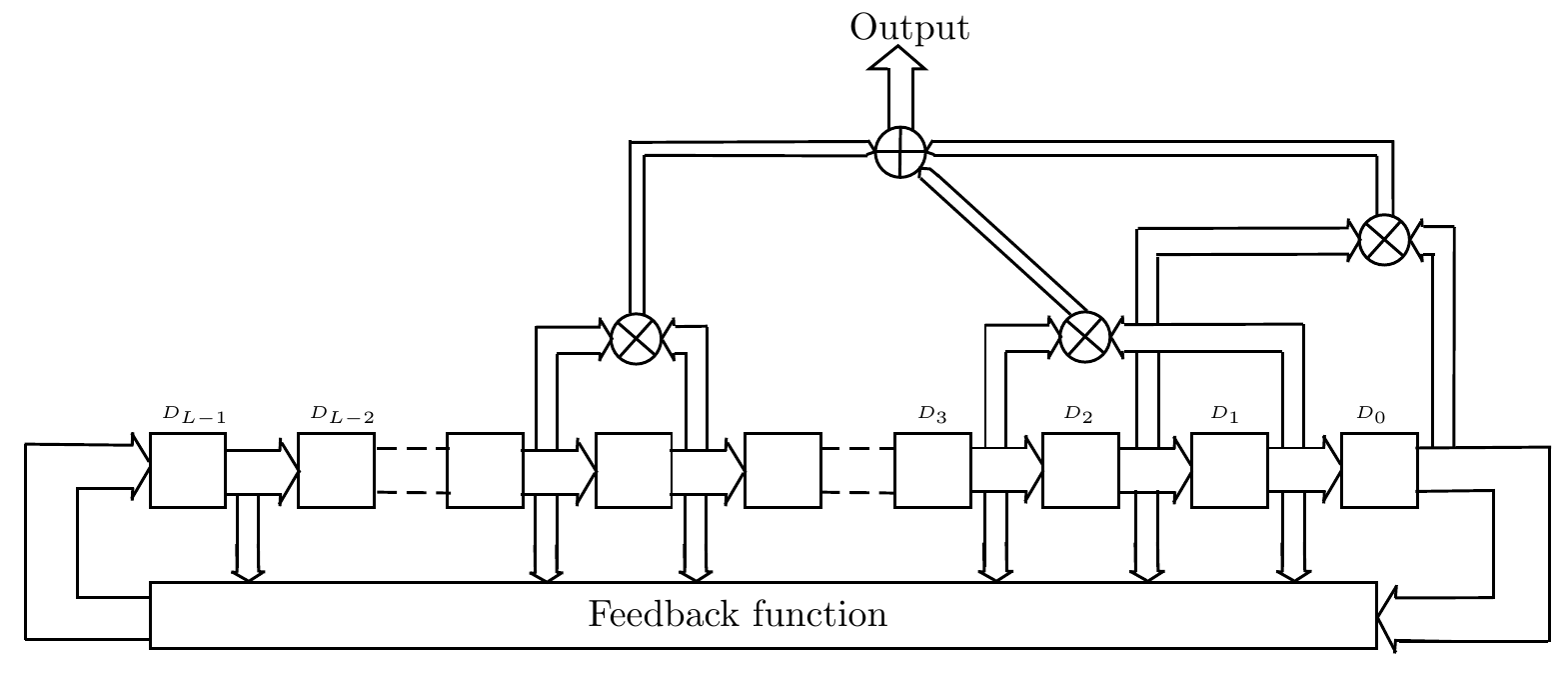}
\caption{NLFG based on $\sigma$-LFSR}
\label{sigma NLFG}
\end{figure*}

Let $p(x)$ be a primitive polynomial of degree $r$. Now, $\mathbb{F}_{q^r}$ can be seen as the residue class ring $^{\mathbb{F}_{q}[x]}/_{<p(x)>}$. The set $\{[1],[x],\ldots,[x^{r-1}]\}$ is a basis of $^{\mathbb{F}_{q}[x]}/_{<p(x)>}$, where $[x]$ denotes the equivalence class of $x$. Given a polynomial $f(x) \in \mathbb{F}_{q}[x]$, the equivalence class of $f(x)$ has a unique representative element with degree less than $r$. We therefore have the following map $\mathcal{M}: \mathbb{F}_{q^r} \rightarrow \mathbb{F}_{q}^{r}$. 

\begin{align*}
\mathcal{M}\big(f_0[1]+f_1[x]+\ldots+f_{r-1}[x^{r-1}]\big)=\begin{bmatrix}
f_0\\
f_1\\
\vdots\\
f_{r-1}
\end{bmatrix}
\end{align*}
Clearly, the above map is a vector space homomorphism. Using this map, we define multiplication of two elements in $\mathbb{F}_{q}^{r}$, denoted as $\times$, as follows. 
\begin{align*}
v_1 \times v_2 = \mathcal{M}\big([\mathcal{M}^{-1}(v_1).\mathcal{M}^{-1}(v_2)\big])
\end{align*}
where $v_1,v_2 \in \mathbb{F}_{q}^{r}$. Let $v_1=\mathcal{M}([f_{1}(x)])$ and $v_2=\mathcal{M}([f_{2}(x)])$. Therefore, $v_1 \times v_2$ is a vector whose entries are the coefficients of the polynomial $g(x)=f_1f_2~ mod ~p(x)$. If $f_1$ and $f_2$ are the unique elements in their respective equivalence classes having degree less than $r$ then $f_1(x)f_2(x)$ is a polynomial with degree less than $2r$. Let $v \in \mathbb{F}_q^{2r-1}$ be a vector whose entries are the coefficients of $f_1f_2$. Now, $v_{f_1f_2} = v_1 \ast v_2$ where $\ast$ denotes convolution. Observe that $v_1 \times v_2 = \mathcal{Q}v_{f_1f_2} = \mathcal{Q}(v_1 \ast v_2)$ where $\mathcal{Q}\in \mathbb{F}_q^{(2r-1)\times(2r-1)}$ is the following matrix.

\begin{align}
\mathcal{Q} = [I^{(r-1)\times (r-1)}~\vdots~\mathcal{M}([x^{r}])~\ldots~\mathcal{M}([x^{2r-2}])~\vdots~\mathcal{M}([x^{2r-1}])]
\label{Qmatrix}
\end{align}

\begin{example}
Consider vectors $v_1 = [1~1~0]^{T}$, $v_2 = [1~0~1]^{T} \in \mathbb{F}_{2}^{3}$. Let $p(x)=x^3+x+1$. From Equation \ref{Qmatrix}, the $Q$ matrix is as follows.
\begin{align*}
 \mathcal{Q}=
\begin{bmatrix}
1~ 0~ 0~ 1~ 0 \\0~ 1~ 0~ 1~ 1 \\0~ 0~ 1~ 0~ 1
\end{bmatrix}_{3 \times 5}
\end{align*}
Now, $v_1 \ast v_2 = [1~1~1~1~0]^{T} \in \mathbb{F}_{2}^{5}$. Therefore, $v_1 \times v_2 = \mathcal{Q}(v_1 \ast v_2) = [0~0~1]^{T}$.
\end{example}

As shown in Figure \ref{sigma NLFG}, in the proposed scheme the underlying FSR is a $\sigma$-LFSR and the multiplier assembly has $m \leq \lfloor \frac{L}{2}\rfloor$ multipliers. 
Each multiplier takes the output of two distinct $r$-input $r$-output delay blocks, convolves them and multiplies the result with the matrix $Q$ given in Equation \ref{Qmatrix}. It thus implements the map `$\times$' described above. The outputs of the multipliers are then added to generate the output vector sequence. As in a conventional NLFG, the output of each delay block can act as an input to at most one multiplier. Since the proposed scheme views a $\sigma$-LFSR as an FSR over $\mathbb{F}_{q^r}$ and the outputs of the delay blocks are multiplied as elements of $\mathbb{F}_{q^r}$, the analysis given in Section \ref{sec3} is valid for this scheme. Let $\mathbf{N}_{m}^{L}(v)$ be the number of occurrences of a vector $v$ in a single cycle of the sequence generated by the proposed NLFG. From Corollary \ref{N_L_m_K_final}, $\mathbf{N}_{m}^{L}(v)$ is given by; 
\begin{align}
\mathbf{N}_{m}^{L}(v) =
\left\{
\begin{array}{ll}
       q^{r(L-m-1)}(q^{rm}-1) & ~~when~ v \neq 0\\
       q^{r(L-m-1)}(q^{rm}+q^{r}-1)-1 & ~~when~ v=0\\
\end{array}
\right.
\label{eqn10}
\end{align}

In order to draw a comparison between the proposed scheme and that given in  \cite{sartaj2012}, we now briefly analyse the distribution of vectors in sequences generated by the latter. Although \cite{sartaj2012} deals only with the binary case, in our analysis we consider the NLFG to be over an arbitrary finite field $\mathbb{F}_q$. The only difference between the scheme given in \cite{sartaj2012} and the one proposed here is that there the output of the delay blocks are multiplied element-wise. In the remainder of this section, we shall refer to NLFGs that use the scheme given in \cite{sartaj2012}  as element-wise NLFGs. Element-wise multiplication operation in a multiplier assembly is depicted in Figure \ref{element wise adder}.

\begin{figure}[h]
\centering
\includegraphics[scale=0.75]{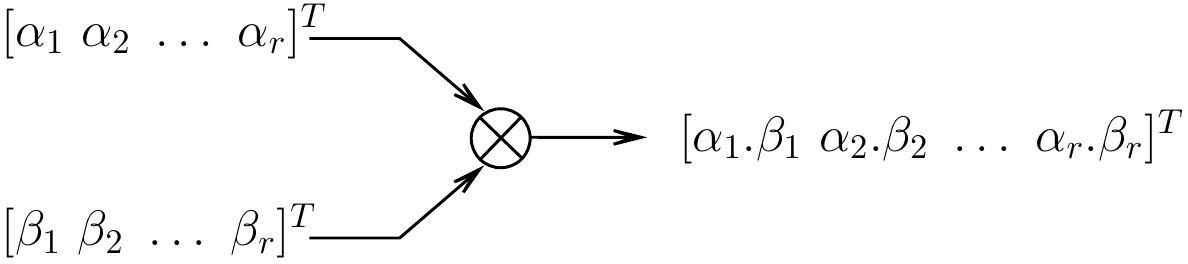}
\caption{Element-wise addition and multiplication}
\label{element wise adder}
\end{figure}

\begin{theorem}
\label{main result for mimo ma}
Consider an element-wise NLFG having $L$ $r$-input $r$-output delay blocks and $m \leq \lfloor \frac{L}{2} \rfloor$ multipliers. For a given nonzero vector $\mathbf{v}\in \mathbb{F}_q^r$, the number $\mathbf{\Psi_{m}(v)}$ of inputs to the multiplier assembly that generate $\mathbf{v}$ at the output is given by  $$\mathbf{\Psi_{m}(v)}= (q^{m-1})^{r}(q^{m}-1)^{\kappa}(q^{m}+q-1)^{r-\kappa}$$ where $\kappa$ is the number of nonzero elements in $v$.  
\end{theorem}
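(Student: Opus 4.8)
The plan is to exploit the fact that in an element-wise NLFG both the multiplications and the final summation act coordinate-by-coordinate, so the whole counting problem splits into $r$ independent copies of the scalar problem already solved in Theorem \ref{main result}. Writing the $2m$ input vectors to the multiplier assembly as $\mathbf{a}^{(1)}, \mathbf{b}^{(1)}, \ldots, \mathbf{a}^{(m)}, \mathbf{b}^{(m)} \in \mathbb{F}_q^r$, the $k$-th multiplier outputs the element-wise product $\mathbf{a}^{(k)} \odot \mathbf{b}^{(k)}$, and the assembly output is $\mathbf{v} = \sum_{k=1}^m \mathbf{a}^{(k)} \odot \mathbf{b}^{(k)}$. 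Reading this one coordinate at a time, the $j$-th entry satisfies $v_j = \sum_{k=1}^m a_j^{(k)} b_j^{(k)}$, which involves only the $j$-th entries of the input vectors.

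First I would make the decoupling precise: the data determining the output are the $r$ scalar tuples $(a_j^{(1)}, b_j^{(1)}, \ldots, a_j^{(m)}, b_j^{(m)})$, one for each $j \in \{1, \ldots, r\}$, and these tuples may be chosen independently of one another, since distinct coordinates of the input vectors are mutually unconstrained. Hence an input to the multiplier assembly that produces $\mathbf{v}$ is exactly a choice, for each coordinate $j$, of a scalar input tuple whose $m$ products sum to $v_j$ in $\mathbb{F}_q$. By definition, the number of such scalar tuples for coordinate $j$ is precisely $\psi_m(v_j)$, the quantity computed in Theorem \ref{main result}.

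Next, by the multiplication principle over the independent coordinates, $\mathbf{\Psi}_m(\mathbf{v}) = \prod_{j=1}^r \psi_m(v_j)$. Substituting the two-case formula from Theorem \ref{main result}, each nonzero coordinate contributes a factor $q^{m-1}(q^m-1)$ and each zero coordinate contributes $q^{m-1}(q^m+q-1)$. Collecting the $\kappa$ nonzero and $r-\kappa$ zero coordinates and pulling out the common factor $q^{m-1}$ from all $r$ terms then yields $\mathbf{\Psi}_m(\mathbf{v}) = (q^{m-1})^r (q^m-1)^\kappa (q^m+q-1)^{r-\kappa}$, as claimed.

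The only real obstacle is justifying the independence of the coordinates rigorously, i.e.\ confirming that element-wise multiplication and addition introduce no cross-coordinate coupling, so that the set of valid inputs is a genuine Cartesian product over $j$ and the counts multiply. As a consistency check I would verify that summing $\mathbf{\Psi}_m(\mathbf{v})$ over all $\mathbf{v} \in \mathbb{F}_q^r$ returns the total input count $q^{2mr}$, which follows at once since $\sum_{K \in \mathbb{F}_q} \psi_m(K) = q^{2m}$.
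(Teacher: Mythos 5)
Your proposal is correct and follows essentially the same route as the paper: both arguments observe that element-wise multiplication and addition decouple the $r$ coordinates, apply the scalar count $\psi_m(v_j)$ from Theorem \ref{main result} to each coordinate, and multiply the $\kappa$ factors $q^{m-1}(q^m-1)$ with the $r-\kappa$ factors $q^{m-1}(q^m+q-1)$. If anything, your version is slightly more self-contained, since you justify the Cartesian-product structure of the input set directly (with the check $\sum_{K\in\mathbb{F}_q}\psi_m(K)=q^{2m}$), whereas the paper additionally invokes Lemma \ref{component sequence} about component sequences, which is not really needed for this purely combinatorial count of multiplier-assembly inputs.
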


\begin{proof}
Since addition and multiplication are performed element-wise, the $i$-th entry $v_i$ of the output vector sequence is a function of only the $i$-th outputs of the delay blocks of the $\sigma$-LFSR. Further, from Lemma \ref{component sequence} it can be inferred that each component sequence of the $\sigma$-LFSR can be seen to be generated by a scalar LFSR whose characteristic polynomial is the same as that of the $\sigma$-LFSR. Therefore, the $i$-th bit of the output sequence of the NLFG can be seen to be generated by a scalar NLFG with a primitive scalar LFSR having $rL$ delay blocks and a multiplier assembly with $m$ multipliers. From Theorem \ref{main result}, the number of inputs to this multiplier assembly that generates $v_i$ at the output is given by 

\begin{align*}
\psi_{m}(v_i)=
\left\{
\begin{array}{ll}
       q^{m-1}(q^{m}-1) & ~~~~when~ v_i \neq 0.\\
       q^{m-1}(q^{m}+q-1) & ~~~~when~ v_i=0.\\
\end{array}
\right.
\end{align*}
Therefore, the total number of possible inputs to the multiplier assembly that generates a given vector $v$ having $\kappa$ nonzero elements is given by 
\begin{align*}
\mathbf{\Psi_{m}(v)} 
&=\big\{q^{{m-1}}(q^{m}-1)\big\}^{\kappa}\big\{q^{{m-1}}(q^{m}+q-1)\big\}^{r-\kappa}\\
&= (q^{m-1})^{r}(q^{m}-1)^{\kappa}(q^{m}+q-1)^{r-\kappa}
\end{align*}
\end{proof}

\begin{remark}
Clearly, in the case when $r=1, \kappa=1$ and $r=1, \kappa=0$, Theorem \ref{main result for mimo ma} translates to Theorem \ref{main result} .   
\end{remark}

For an NLFG having $L$ $r$-input $r$-output delay blocks and $m \leq \lfloor L/2 \rfloor$ multipliers, let $\mathfrak{N}_{m}^{L}(v)$ denote the number of times in a single cycle that the vector $v\in \mathbb{F}_{q}^{r}$ occurs at the output of the NLFG. 

\begin{corollary}
$\newline$
\label{N_M_dot}
$\mathfrak{N}_{m}^{L}(v) =
\left\{
\begin{array}{ll}
      q^{r(L-m-1)}(q^{m}-1)^{\kappa}(q^m+q-1)^{r-\kappa} & ~ ,v \neq 0.\\
      q^{r(L-m-1)}(q^m+q-1)^{r}-1 &~ ,v=0.\\
\end{array}
\right. 
$
\end{corollary}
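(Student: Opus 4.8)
The plan is to mirror the bookkeeping already used to pass from $\psi_m(\cdot)$ to $\mathcal{N}_m^L(\cdot)$ in Equation \ref{N_L_m_K}, now applied to the element-wise count $\mathbf{\Psi_m(v)}$ supplied by Theorem \ref{main result for mimo ma}. Since the characteristic polynomial of the $\sigma$-LFSR is primitive, every one of the $q^{rL}-1$ nonzero state vectors in $\mathbb{F}_q^{rL}$ occurs exactly once per period (via Lemma \ref{component sequence} and the isomorphism $\mathbb{F}_q^r\cong\mathbb{F}_{q^r}$). The multiplier assembly reads off only the $2m$ delay blocks that feed its multipliers; these account for $2mr$ of the $rL$ scalar coordinates of the state, leaving the remaining $r(L-2m)$ coordinates free. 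Consequently each fixed input configuration to the multiplier assembly is realised by exactly $q^{r(L-2m)}$ distinct state vectors.

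First I would record this occurrence factor: for any target vector $v$, the number of state vectors carrying an assembly input that produces $v$ is $q^{r(L-2m)}\,\mathbf{\Psi_m(v)}$, where the only state excluded from a full period is the all-zero state, which feeds the all-zero input into the assembly and hence can only contribute to the output $0$. Therefore for $v\neq 0$ no state is lost, and I would substitute $\mathbf{\Psi_m(v)}=(q^{m-1})^r(q^m-1)^\kappa(q^m+q-1)^{r-\kappa}$ from Theorem \ref{main result for mimo ma} and collapse the powers of $q$ using $q^{r(L-2m)}(q^{m-1})^r=q^{r(L-m-1)}$, which yields the stated nonzero branch $q^{r(L-m-1)}(q^m-1)^\kappa(q^m+q-1)^{r-\kappa}$.

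For $v=0$ I would first note that, because addition and multiplication act independently on each of the $r$ coordinates, the $\kappa=0$ instance of Theorem \ref{main result for mimo ma} applies; equivalently, $\mathbf{\Psi_m(0)}$ is the product of the $r$ scalar values $\psi_m(0)$ from Theorem \ref{main result}, giving $\mathbf{\Psi_m(0)}=(q^{m-1})^r(q^m+q-1)^r$. I would then apply the same occurrence factor and subtract one to remove the excluded all-zero state, exactly as in the $K=0$ branch of Equation \ref{N_L_m_K}; the single subtraction is legitimate because the zero-state exclusion falls entirely in the zero-output fibre. This gives $\mathfrak{N}_m^L(0)=q^{r(L-2m)}\mathbf{\Psi_m(0)}-1=q^{r(L-m-1)}(q^m+q-1)^r-1$, the claimed zero branch.

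The one point that needs genuine care rather than mere substitution is the validity of the uniform occurrence factor $q^{r(L-2m)}$: it relies on the $m$ multipliers drawing from $2m$ \emph{distinct} delay blocks (guaranteed by the scheme, since each block feeds at most one multiplier) and on the full state covering all nonzero vectors of $\mathbb{F}_q^{rL}$ exactly once. Granting this, the result is a direct specialisation of the scalar counting argument of Section \ref{sec3}, and no real obstacle remains beyond the routine consolidation of the exponents.
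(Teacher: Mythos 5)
Your proposal is correct and follows essentially the same route as the paper's own proof: count $q^{r(L-2m)}$ state vectors per multiplier-assembly input, multiply by $\mathbf{\Psi_m(v)}$ from Theorem \ref{main result for mimo ma}, and subtract one in the $v=0$ branch for the excluded all-zero state. Your explicit justification that $\mathbf{\Psi_m(0)}=(q^{m-1})^r(q^m+q-1)^r$ via the coordinate-wise product of scalar $\psi_m(0)$ values is a small welcome addition, since Theorem \ref{main result for mimo ma} is stated only for nonzero $v$ and the paper substitutes the $\kappa=0$ formula without comment.
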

\begin{proof}
Since every nonzero state vector occurs exactly once in every period of the underlying primitive $\sigma$-LFSR, $\mathfrak{N}_{m}^{L}(v)$ is equal to the number of nonzero states of the $\sigma$-LFSR that generate $v \in \mathbb{F}_q^r$ at the output of the NLFG. Clearly, for each input to the multiplier assembly there are $q^{L-2m}$ possible state vectors of the $\sigma$-LFSR (since $L-2m$ of the delay blocks are not connected to the multiplier assembly). Therefore, the number of times a nonzero vector $v$ occurs at the output of the NLFG in a single period is equal to $q^{r(L-2m)}\mathbf{\Psi_{m}(v)}$. Now, among the states of the $\sigma$-LFSR that result in zero at the output of the NLFG is the zero state. However, this state does not occur in any nonzero cycle. Therefore, the number of times the zero vector occurs at the output of the NLFG in a single period is equal to $q^{r(L-2m)}\mathbf{\Psi_{m}(0)}-1$. Thus, 
$$
\mathfrak{N}_{m}^{L}(v) =
\left\{
\begin{array}{ll}
      q^{r(L-2m)}\mathbf{\Psi_{m}(v)} & ~~when~ v \neq 0.\\
      q^{r(L-2m)}\mathbf{\Psi_{m}(0)}-1 & ~~when~ v=0.\\
\end{array}
\right. 
$$
Substituting the value of $\Psi_{m}(v)$ from Theorem \ref{main result for mimo ma}, we get-
$$
\mathfrak{N}_{m}^{L}(v) =
\left\{
\begin{array}{ll}
       q^{r(L-m-1)}(q^{m}-1)^{\kappa}(q^m+q-1)^{r-\kappa} & ~~ ,v \neq 0.\\
      q^{r(L-m-1)}(q^m+q-1)^{r}-1 & ~~, v=0.\\
\end{array}
\right.
$$ 
\end{proof}

Comparing the formulae derived in Corollary \ref{N_M_dot} with those in Equation \ref{eqn10}, it is clearly seen that the output sequence of an element-wise NLFG has a bias towards vectors having a greater number of zeros. This however is not the case with the scheme proposed in this paper.

\begin{example}
Let $q=2, L=5, m=2$ and $r=3$. The number of occurrences of $v_1= [0~0~0]^T$, $v_2=[0~1~0]^T$ and $v_3=[1~1~1]^T$ at the output of an element-wise NLFG are $7999, 4800$ and $1728$ respectively. However, the number of occurrences of the vectors $v_1, v_2$ and $v_3$ at the output of our proposed NLFG scheme are $4543, 4032$ and $4032$ respectively. 
\end{example}

\section{Conclusion}
\label{sec5}
In this paper, we have extended the notion of NLFGs to arbitrary finite fields and have analyzed the statistical properties of the sequences generated by such NLFGs. Further, we have proposed an implementation of NLFGs over $\sigma$-LFSRs and have shown that the sequences generated by such proposed scheme are more balanced than the sequences generated by the existing scheme given in \cite{sartaj2012}.

\section*{Acknowledgment}

The authors are grateful to Prof. Harish K. Pillai, Department of Electrical Engineering, Indian Institute of Technology Bombay, without whom this work would never have been possible.


%

%



\ifCLASSOPTIONcaptionsoff
  \newpage
\fi



%

\bibliography{typeinst}
\bibliographystyle{ieeetran}

%
%

%

%
%
%




\end{document}